\documentclass[a4paper,UKenglish,cleveref,autoref,thm-restate,authorcolumns]{lipics-v2021}
\usepackage[htt]{hyphenat}


\bibliographystyle{plainurl}

\title{Prefix-free parsing for building large tunnelled Wheeler graphs}

\titlerunning{PFP for building large tunnelled WGs} 

\author{Adri\'an Goga\footnote{Corresponding author}}{Department of Computer Science, Faculty of Mathematics, Physics and Informatics, Comenius University, Bratislava, Slovakia}{adrian.goga@fmph.uniba.sk}{}{VEGA grant 1/0463/20; EU Horizon 2020 grant No. 956229 (ALPACA); Comenius University grant for doctoral students No. 422}

\author{Andrej Bal\'a\v{z}}{Department of Applied Informatics, Faculty of Mathematics, Physics and Informatics, Comenius University, Bratislava, Slovakia}{andrej.balaz@fmph.uniba.sk}{}{VEGA grant 1/0538/22; EU Horizon 2020 grant No. 956229 (ALPACA)}

\authorrunning{A.\,Goga and A.\, Bal\'a\v{z}} 

\Copyright{Adri\'an Goga and Andrej Bal\'a\v{z}} 

\ccsdesc[300]{Theory of computation~Theory and algorithms for application domains}

\keywords{Wheeler graphs, BWT tunnelling, prefix-free parsing, pangenomic graphs} 

\category{} 

\relatedversion{} 

\supplement{source code: \url{https://github.com/fmfi-compbio/pfp_wg}}


\acknowledgements{We want to thank Travis Gagie for the conception of the idea during his data structures course and helpful remarks throughout the realisation of this project. Our thanks also go to Bro\v{n}a Brejov\'a for useful advice during the writing process and Uwe Baier for kindly responding to our questions via email. Finally, we thank Lucas Pansani Ramos for the aid he provided in the early days of the project.}

\nolinenumbers 

\hideLIPIcs  

\EventEditors{Christina Boucher and Sven Rahmann}
\EventNoEds{2}
\EventLongTitle{22nd International Workshop on Algorithms in Bioinformatics (WABI 2022)}
\EventShortTitle{WABI 2022}
\EventAcronym{WABI}
\EventYear{2022}
\EventDate{September 5--7, 2022}
\EventLocation{Potsdam, Germany}
\EventLogo{}
\SeriesVolume{242}
\ArticleNo{18}

\begin{document}

\maketitle
\begin{abstract}
We propose a new technique for creating a space-efficient index for large repetitive text collections, such as pangenomic databases containing sequences of many individuals from the same species. We combine two recent techniques from this area: Wheeler graphs (Gagie et al., 2017) and prefix-free parsing (PFP, Boucher et al., 2019).

Wheeler graphs are a general framework encompassing several indexes based on the Burrows-Wheeler transform (BWT), such as the FM-index. Wheeler graphs admit a succinct representation which can be further compacted by employing the idea of tunnelling, which exploits redundancies in the form of parallel, equally-labelled paths called blocks that can be merged into a single path. The problem of finding the optimal set of blocks for tunnelling, i.e. the one that minimizes the size of the resulting Wheeler graph, is known to be NP-complete and remains the most computationally challenging part of the tunnelling process. 

To find an adequate set of blocks in less time, we propose a new method based on the prefix-free parsing (PFP). The idea of PFP is to divide the input text into phrases of roughly equal sizes that overlap by a fixed number of characters. The phrases are then sorted lexicographically. The original text is represented by a sequence of phrase ranks (the parse) and a list of all used phrases (the dictionary). In repetitive texts, the PFP representation of the text is generally much shorter than the original since individual phrases are used many times in the parse, thus reducing the size of the dictionary. 

To speed up the block selection for tunnelling, we apply the PFP to obtain the parse and the dictionary of the original text, tunnel the Wheeler graph of the parse using existing heuristics and subsequently use this tunnelled parse to construct a compact Wheeler graph of the original text. Compared with constructing a Wheeler graph from the original text without PFP, our method is much faster and uses less memory on collections of pangenomic sequences. Therefore, our method enables the use of Wheeler graphs as a pangenomic reference for real-world pangenomic datasets.
\end{abstract}

\section{Introduction}
The discovery of Burrows-Wheeler transformation (BWT) \cite{1994burrows&wheeler}, a text permutation initially intended for data compression, and the following breakthrough that enhanced it with indexing properties started a new paradigm of text indexing. In this paradigm, the input text can be stored in space not far larger than its empirical entropy while allowing for efficient pattern matching queries.
The essence of the BWT, the idea of suffix sorting, has been since enjoyed by a large variety of popular genomic indexing tools. Sequencing read alignment tools, such as BWA \cite{2013li}, Bowtie \cite{2010langmead,2012langmead&salzberg} or CHIC \cite{2017valenzuela&makinen} are directly based upon the BWT, while others such as VG \cite{2018garrison_etal} use some of its aspects.

As the BWT is an essential tool in processing large textual datasets, it motivates us to further minimize its space requirements. Since its conception, a popular choice has been to use move-to-front (MTF) transformation and subsequently the run-length encoding (RLE), followed by an entropy encoding, e.g. Huffman \cite{1994burrows&wheeler}. Quite recently, in 2018, Baier \cite{2018baier} 
described another novel source of redundancy in BWTs, yet undetected by the RLE, and proposed a technique called \textit{tunnelling} purposed to deal with this particular redundancy. Tunnelled BWTs are a case of Wheeler graphs \cite{2017gagie_etal}, a general framework of the text indices based on the idea of suffix sorting.

Tunnelling has been experimentally proven to reduce the size of the resulting BWTs, especially in repetitive texts. The size of the input files is a significant bottleneck of this approach for two reasons. The first one is the fast BWT construction algorithm which requires linear, although non-negligible, memory overhead over the input file itself. The second reason is that tunnelling requires random access to the BWT, rendering it impractical when working in external memory.

The first of these issues could be alleviated by employing a suitable BWT construction algorithm, e.g. one that creates the BWT from a compressed representation of the input, such as the Lempel-Ziv parse by Policriti and Prezza \cite{2017policriti&prezza} or the Prefix-Free Parsing (PFP) method by Boucher et al. \cite{2019boucher_etal}.
PFP is especially interesting due to its simplicity and effectiveness, accomplishing to represent large repetitive texts in orders of magnitude smaller space \cite{2019boucher_etal}.
The second issue is harder to tackle, requiring the attention to either adapt the particular tunnelling algorithm to find the tunnelled BWT from the compressed representation or, when possible, tunnelling the BWT of the compressed representation and then using the information to output a tunnelled BWT of the original input.

The objective of our work is to resolve both problems simultaneously. We have adapted the PFP approach for building BWTs from large repetitive datasets in such a way that it produces tunnelled BWTs. While the resulting BWTs are larger than the original tunnelled BWTs, our approach scales to much larger dataset sizes and could be improved by further post-processing. Our method is not limited to producing tunnelled BWTs, but virtually any Wheeler graphs and hence provides a good starting point for building large pangenomic indexes.

\subsection{Preliminaries}
In this work, we will use the following notation.
Let $\Sigma$ be the working alphabet, i.e. a finite set of symbols totally ordered by $\preceq$.
We denote the size of $\Sigma$ as $|\Sigma| = \sigma$.
A \textit{string} $T = T[1] \ldots T[n]$ is a sequence of symbols from $\Sigma$.
We use the terms \textit{string} and \textit{text} interchangeably, similarly with \textit{symbol} and \textit{character}.
A string made of a single character is called \textit{unary}.
Furthermore, strings that end with $\texttt{\$}$ are called \textit{null-terminated}, where $\texttt{\$}$ is the lexicographically smallest symbol in $\Sigma$ (i.e. $\texttt{\$} \preceq c$ for any $c \in \Sigma$) that may only appear at the right end.
Moreover, we extend the use of $\preceq$ to label the lexicographic order of strings and write $\alpha \preceq \beta$ if and only if $\alpha$ is lexicographically smaller or equal than $\beta$, where $\alpha, \beta \in \Sigma^{*}$.
We will write $\alpha \prec \beta$ if $\alpha \preceq \beta$ and $\alpha \neq \beta$.

The length of the string $T$ will be denoted by $|T| = n$.
The string $\varepsilon$ is the unique string of length $0$.
A character of $T$ at position $i$ is denoted by $T[i]$.
A substring $T[i \ldots j]$ of $S$ is a subsequence $T[i] T[i+1] \ldots T[j]$.
We assume $T[i \ldots j] = \varepsilon$ if $i > j$.
The substring $T[1 \ldots j]$ is called the $j$-th \textit{prefix} of $T$, while the substring $T[j \ldots n]$ is called the $j$-th \textit{suffix}.

For integers $i, j$ we define the set $\{k ~|~ k \in \mathbb{N};~ i \leq k \leq j \}$ to be the \textit{interval} between $i$ and $j$ and denote it as $[i, j]$.

\section{Background}
\subsection{Burrows-Wheeler transform}
The Burrows-Wheeler transform (BWT) \cite{1994burrows&wheeler} is an incredibly influential transformation of text, formed by the concatenation of the letters in the last column of the Burrows-Wheeler matrix, which is created by lexicographically sorting all possible rotations of the original text.
This construction is visualized in Figure \ref{fig:bwt}.

\begin{figure}[ht]
\centering
\includegraphics[width=\linewidth]{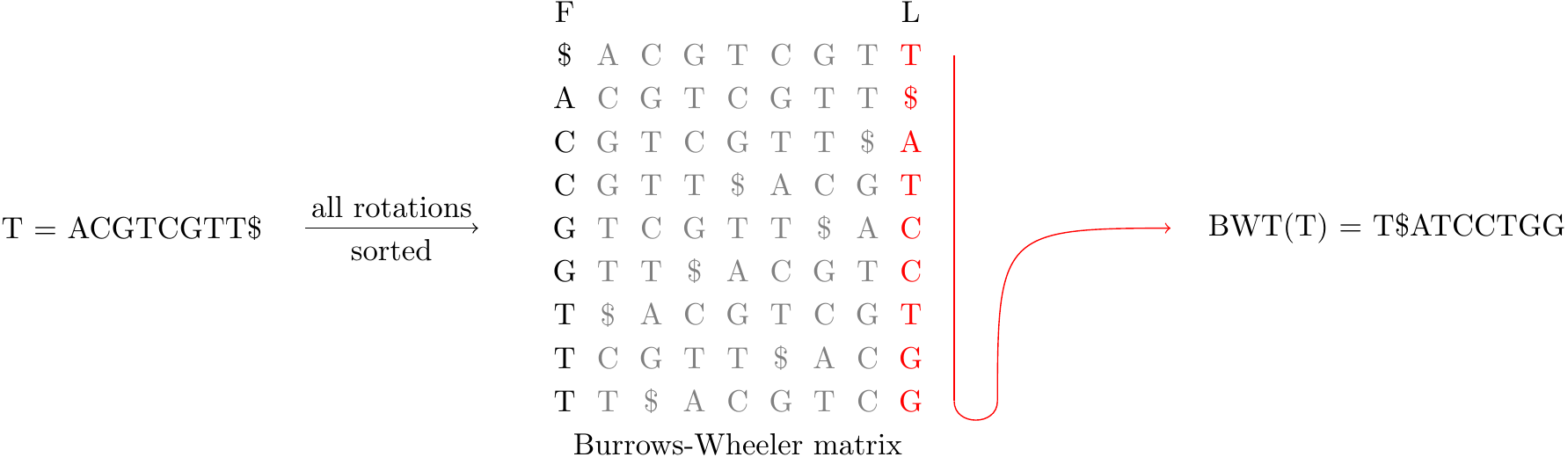}
\centering
\caption[BWT construction]{BWT construction. All rotations of the string \texttt{ACGTCGTT\$} are sorted lexicographically in the Burrows-Wheeler matrix. The last column of Burrows-Wheeler matrix represents the BWT.}
\label{fig:bwt}
\end{figure}

The importance of the BWT in string processing is derived from its compressibility, reversibility and usefulness in pattern matching.
At first, the improvement in the compressibility of BWT in comparison with the original string is not apparent. 
However, it stems from empirical observation that similar contexts tend to be preceded by the same characters.
The construction of BWT sorts these contexts and therefore tends to place identical characters in runs, which improves the compression.

The reversibility of the BWT is guaranteed by the so-called LF mapping, which uses the observation that the ranks of letters remain the same between the BWT and the first column of the Wheeler matrix F (i.e. the first letter in BWT corresponds to the first letter in F).
Since column F can be recovered from BWT by simply sorting the letters of BWT, the original text can be reconstructed in reverse order by following the LF mapping from the first row of the implicit Wheeler matrix.
Furthermore, the LF mapping, together with the suffix array, can be used for efficient pattern matching in the data structure known as FM-index \cite{2000ferragina&manzini}.

Although illustrative, the aforementioned construction of BWT is time-inefficient and seldom used in practice.
Several efficient methods for construction in linear time exist \cite{2003karkkainen&sanders, 2019boucher_etal}.
This work utilizes prefix-free parsing (PFP), a technique first used for building BWTs in \cite{2019boucher_etal}.
In comparison with previous methods, the PFP approach allowed the construction of BWT in sublinear space for repetitive data and therefore unlocked a possibility to produce BWTs of big datasets, of which sizes vastly exceed the memory limits of current computers.

\subsection{Prefix-free parsing}
In prefix-free parsing, the original text $T$ is divided into phrases of variable length, which are stored in the dictionary $\mathcal{D}$.
Furthermore, the parse $\mathcal{P}$ is created as a list of lexicographic ranks of phrases, in order in which the phrases appear in the original text.
The parse $\mathcal{P}$ and the dictionary $\mathcal{D}$ allow us to reconstruct the original text $T$.

To perform the division of $T$ into phrases, we define a set of trigger words $E$, where each trigger word is of length $w$.
Subsequently, we use a sliding window of size $w$ through the text $T$ and each time the sliding window matches a trigger word in $E$, we terminate the current phrase, add it to the dictionary, and initiate a new phrase.
In the end, the dictionary is sorted lexicographically, and the parse is relabeled accordingly.
It is noteworthy that the phrases in the dictionary begin and end with a trigger word, and consequently, no phrase is a prefix of another phrase in the dictionary.

This PFP construction is valuable for several reasons.
Firstly, only a single pass through the original text is needed to obtain all phrases, and therefore the text can be sequentially read from the disk allowing the processing of large texts.

Secondly, suppose the original text is repetitive. In that case, the parse and the dictionary tend to occupy a much smaller space than the raw representation, and therefore the PFP can be loaded into memory.

Finally, we can use a rolling hash $h: \text{word} \to \{0 \dots p-1\}$ to identify the trigger words.
If the hash is equal to zero, we consider it a member of the set of trigger words $E$.
The parameter $p$ then allows us to adjust the lengths of phrases since the expected length of a phrase in a uniformly-random text is $p$.

An example of a prefix-free parse construction is shown in Example \ref{ex:pfp}.

\begin{example}
\label{ex:pfp}
Let us have a text T = \texttt{\#ABDACDABDACDA\$}. For the purpose of this example, let the set of trigger words be $E = \{\texttt{A}\}$. Then we have $D = {\texttt{\#A}, \texttt{ABDA}, \texttt{ACDA}, \texttt{A\$}}$ and $\mathcal{P} = [0, 1, 2, 1, 2, 3]$. Then we lexicographically sort the dictionary phrases to get $\texttt{\#A}, \texttt{A\$}, \texttt{ABDA}, \texttt{ACDA}$ and the remapped parse $\mathcal{P} =	[0, 2, 3, 2, 3, 1]$.
\end{example}

\subsection{Wheeler graphs}
Wheeler graphs (WGs), introduced by Gagie et al. \cite{2017gagie_etal}, are a class of labelled graphs that generalize the notion of BWT and several of its variations into a unified framework.
A formal definition of Wheeler graphs is presented in Definition \ref{def:wheeler_graphs}. 

\begin{definition}(Wheeler graph)
    \label{def:wheeler_graphs}
    Let $G = (V, E)$ be a directed graph labeled by $\lambda: E \to \Sigma$. 
    Then $G$ is a Wheeler graph if and only if there is a total order $\leq$ of its vertices, also called the Wheeler order, such that for each pair of edges $e_1 = (u_1, v_1), e_2 = (u_2, v_2)$ the following conditions hold:
    \begin{enumerate}
        \item The vertices with zero in-degree precede those with non-zero in-degree,
        \item If $\lambda(e_1) \prec \lambda(e_2)$, then $v_1 < v_2$,
        \item If $\lambda(e_1) = \lambda(e_2)$ and $u_1 < u_2$, then $v_1 \leq v_2$.
    \end{enumerate}
\end{definition}

To illustrate this definition, we can visualize the Wheeler graphs by duplicating the vertices and displaying them in two columns in their Wheeler order, as shown in Figure \ref{fig:wg_viz}.
The edge $(u, v)$ will be drawn from vertex $u$ in the right column to vertex $v$ in the left column.
Then, the first condition guarantees that the vertices without an incoming edge will form a single interval in the ordering, and that interval will be placed at the top.
The second condition guarantees that the vertices with incoming edges labelled by a character $c$ will form an interval, and those intervals appear from top to bottom according to the lexicographic order of the character $c$.
The third condition guarantees that no two edges with the same label $c$ will cross.

\begin{figure}[!ht]
    \centering
    \includegraphics[width=0.55\linewidth]{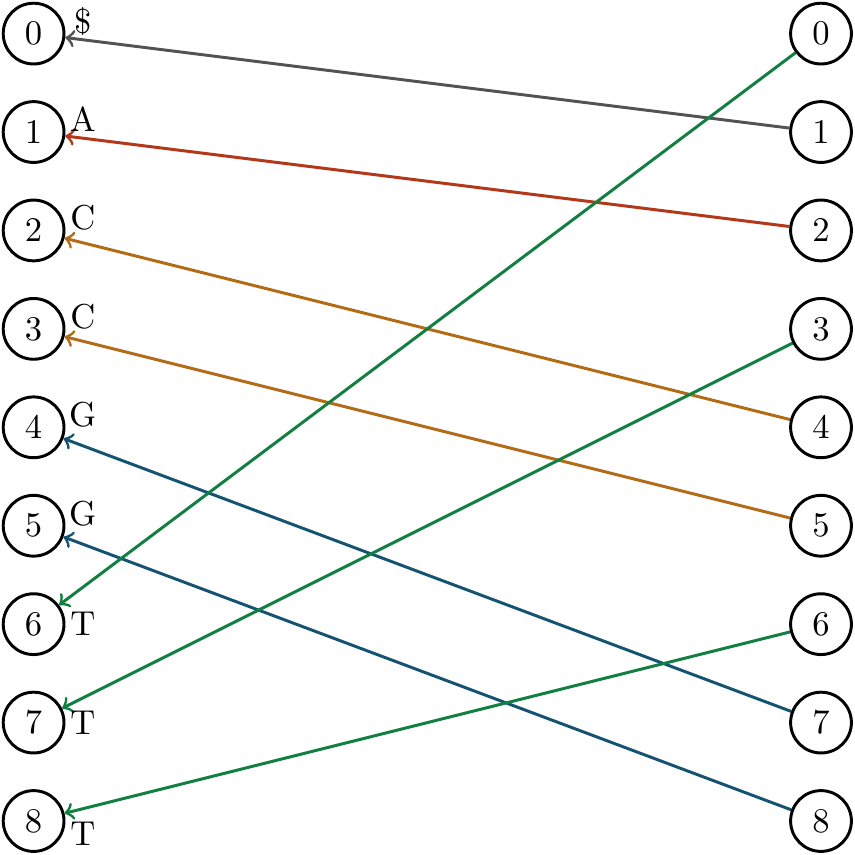}
    \caption{The Wheeler graph visualization. The vertices are doubled and shown in two columns to highlight the conditions of Definition \ref{def:wheeler_graphs}. Visualization also reveals the connection between WGs and LF mapping in BWTs.}
    \label{fig:wg_viz}
\end{figure}

\begin{figure}[!ht]
    \centering
    \includegraphics[width=0.85\linewidth]{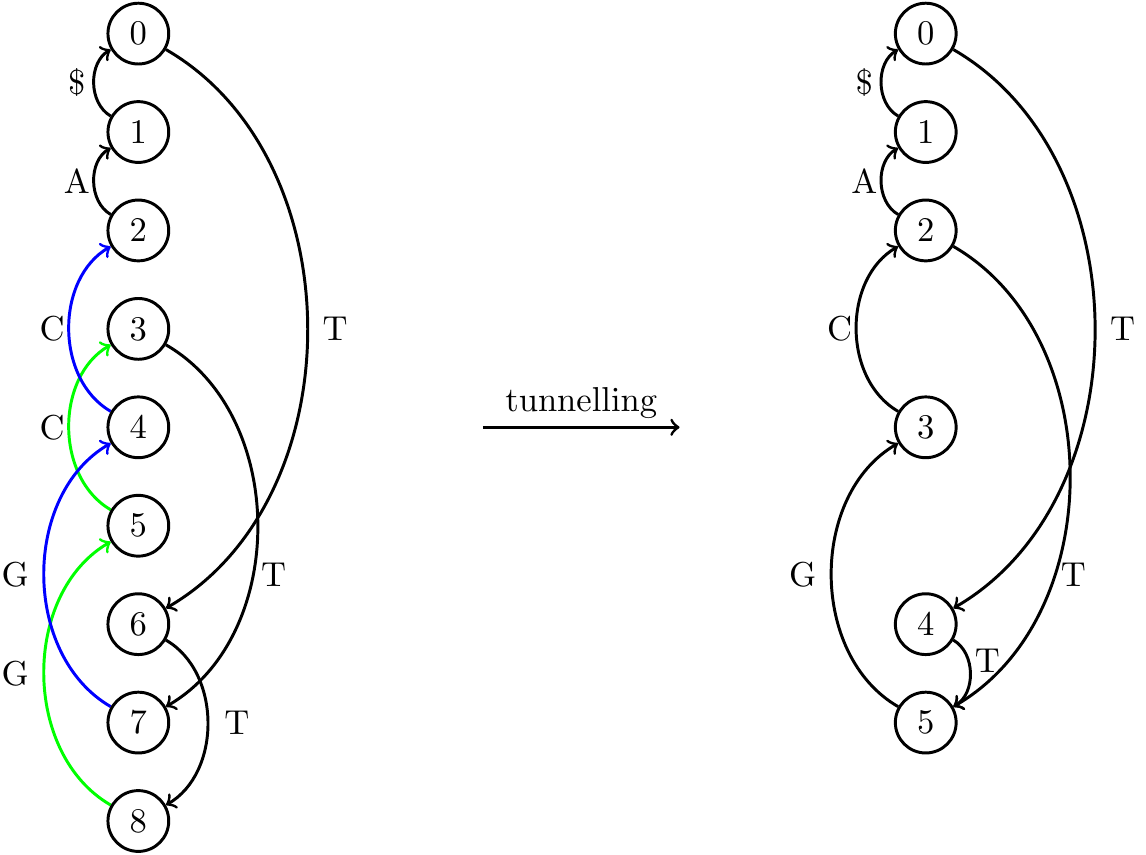}
    \caption{Tunnelling of the WG. Both paths $8-5-3$ (green) and $7-4-2$ (blue) spell \texttt{GC} and are merged into a single path in the reduced WG.}
    \label{fig:tunnelling}
\end{figure}

These conditions give rise to an essential property of WGs, called \textit{path coherence}, which is a generalization of the property from BWTs that enables the LF mapping.
Therefore, WGs can be used for efficient pattern matching and to store textual data in a lossless manner.
Furthermore, WGs allow for a succinct representation by storing the following data.
\begin{itemize}
    \item An array $C$ of length $\sigma$ s.t. $C[c]$ is the number of edges labeled by characters lexicographically smaller than $c$,
    \item The string $L = L_1 \ldots L_n$, where $L_i$ is the concatenation of the labels of the outgoing edges of vertex $v_i$,
    \item The bit vector $O = 10^{d_{out}(v_1)-1} \ldots 10^{d_{out}(v_n)-1}$,
    \item The bit vector $I = 10^{d_{in}(v_1)-1} \ldots 10^{d_{in}(v_n)-1}$,
\end{itemize}
where $d_{in}(v)$ ($d_{out}(v)$) is the in-degree (out-degree) of the vertex $v$.

In comparison with the plain list-of-edges representation, which uses $\Theta (|E| \log \sigma + |E| \log |V|)$ bits of space, the succinct representation uses only $\Theta (|E| \log \sigma + 2|E| + \sigma \log |E|)$ and together with efficient rank and select queries on the bit vectors $O$ and $I$ supports efficient graph traversing. 

Another compelling property of WGs is their reducibility, where a WG representing a particular set of strings can be replaced by a smaller WG representing the same set of strings.
One way to accomplish this is by treating the given WG as a Wheeler nondeterministic finite automaton (WNFA), which we subsequently convert to its deterministic variant (WDFA).
It has been shown that such conversions for automata that correspond to WGs result only in a linear number of additional states \cite{2020alanko_etal}.
For a WDFA, an equivalent WDFA with the minimum number of states can then be found by a linear-time algorithm by Alanko et al. \cite{2022alanko_etal}.

Another particular process of reducing a WG is called \textit{tunnelling} \cite{2018baier} and is illustrated in Figure \ref{fig:tunnelling}.
During the tunnelling, the paths $p_1 = (v_{1,1}, \ldots,v_{1, \ell}), \ldots, p_k = (v_{k,1}, \ldots,v_{k,\ell})$ of which the vertex sets $V_j = \{v_{i,j} ~|~ 1 \leq i \leq k\}$ are adjacent in the Wheeler order for each $1 \leq j \leq \ell$, and which spell the same substring are identified and a tunnel is formed by merging the vertices from $V_j$ into a single vertex $v_j$.
In general, the task of finding the smallest WG via tunneling was proven to be NP-hard, and therefore multiple heuristic approaches to solve this problem were proposed \cite{2019baier&dede}.

\section{Methods}

\label{methods}

We propose to employ PFP as a preprocessing step in building Wheeler graph indexes from large datasets through tunnelling BWTs.
We are inspired by the original work by Boucher et al. \cite{2019boucher_etal} of building large BWTs using small working memory.

After obtaining the parse $\mathcal{P}$ of a text $T$, we build the $BWT(\mathcal{P})$ and subsequently find its tunnelling using one of the approaches proposed by Baier and Dede \cite{2019baier&dede}. 
Since the parse $\mathcal{P}$ is a sequence of lexicographic ranks of dictionary phrases, each character of the parse $\mathcal{P}$ corresponds to multiple characters of $T$ (controlled to a certain degree by the PFP parameter $p$), and hence serves as a significantly shortened representation of $T$.
Therefore, tunnelling $BWT(\mathcal{P})$ is practical for much larger file sizes.

Under the additional assumption that the input files are repetitive, we can also deduce that at least a certain degree of repetitiveness will also be present in $BWT(\mathcal{P})$, and hence $BWT(\mathcal{P})$ can be significantly shortened by tunnelling. Once we obtain a tunnelled $BWT(\mathcal{P})$ in the form of a Wheeler graph $G_P$ of the parse $\mathcal{P}$, our goal is to efficiently transform it to the Wheeler graph $G_T$ of the text $T$ while retaining the tunnelled paths in an appropriate form.


\subsection{Theory}
We begin by introducing a vital claim that was already proven by Boucher et al. \cite{2019boucher_etal}.

\begin{definition}[Suffix set \cite{2019boucher_etal}]
\label{def:suffixset}
Let $T$ be a text of length $n$. We call the set $\text{Suf}(T) = \{T[i \ldots n] ~|~ i \in [1, n]\}$ a suffix set of $T$.
\end{definition}

\begin{lemma}
\label{lemma:suffix_comparison}
Let $T$ be a text, $\mathcal{P}$ be the parse of the $PFP(T)$ and $x, y \in \text{Suf}(T)$ such that $x \prec y$. Furthermore, let $\alpha_x, \alpha_y$ be prefixes of $x,y$ up to the nearest phrase boundaries and $\beta_x, \beta_y$ be the rest of $x,y$.
The $\beta_x, \beta_y$ correspond to suffixes $\gamma_x, \gamma_y$ in the parse $\mathcal{P}$. 
Then either it holds that $\alpha_x \prec \alpha_y$ or $\alpha_x = \alpha_y$ and $\gamma_x \prec \gamma_y$.
\end{lemma}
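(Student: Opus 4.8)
The plan is to reduce the hypothesis $x \prec y$ to a character‑by‑character lexicographic comparison and to split the argument according to whether the two boundary prefixes coincide. Recall that $\alpha_x$ is the prefix of $x$ running up to and including the first trigger word of $E$ met while scanning $x$, so $\alpha_x$ ends with that trigger word, while the remainder $\beta_x$ begins at this phrase boundary and is exactly the string spelled by the suffix $\gamma_x$ of the parse $\mathcal{P}$ (and likewise for $y$). I would first dispose of the case $\alpha_x \neq \alpha_y$ and then treat $\alpha_x = \alpha_y$ separately, showing that the two cases produce the two disjuncts of the statement.

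The key structural observation I would establish first is that neither $\alpha_x$ nor $\alpha_y$ can be a proper prefix of the other. A trigger word ends only at the right end of each $\alpha$: by construction the sliding window cuts a phrase at the \emph{first} position where it matches a member of $E$, so any earlier full occurrence would have produced a shorter $\alpha$. If $\alpha_x$ were a proper prefix of $\alpha_y$, then the trailing trigger word of $\alpha_x$, ending at position $|\alpha_x| < |\alpha_y|$, would occur strictly inside $\alpha_y$, contradicting that the only trigger word of $\alpha_y$ sits at its own right end; the symmetric statement rules out the other direction. Consequently, when $\alpha_x \neq \alpha_y$ the two strings first disagree at a position lying within both, and since $\alpha_x, \alpha_y$ are genuine prefixes of $x, y$ this is also the first disagreement of $x$ and $y$. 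Hence $x \prec y$ forces $\alpha_x \prec \alpha_y$, which is the first disjunct.

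It remains to handle $\alpha_x = \alpha_y$. Now $x$ and $y$ share the common prefix $\alpha_x$, so their order is decided by the remainders $\beta_x, \beta_y$, which both start at a phrase boundary and are spelled by the parse suffixes $\gamma_x, \gamma_y$. I would argue that comparing $\beta_x$ and $\beta_y$ character by character is equivalent to comparing $\gamma_x$ and $\gamma_y$ rank by rank, invoking the prefix‑free property of $\mathcal{D}$: as long as the leading phrases agree the two comparisons advance in lockstep, and at the first pair of differing phrases neither phrase is a prefix of the other, so the character comparison resolves in the same direction as the lexicographic order of the two phrases, which is precisely their order of ranks in $\mathcal{P}$. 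Therefore $x \prec y$ is equivalent to $\gamma_x \prec \gamma_y$, giving the second disjunct; the situation in which one remainder is exhausted first is handled by the terminator \texttt{\$}, under which the shorter remainder, and correspondingly the shorter parse suffix, is the smaller one.

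I expect the main obstacle to be the proper‑prefix case of the second paragraph: making precise, from the mechanics of the trigger‑word cuts, that a trigger word ends only at a phrase boundary and therefore cannot appear in the interior of an $\alpha$. Once this uniqueness‑of‑boundary fact is secured, order preservation in both regimes follows routinely from the definition of lexicographic order together with the prefix‑freeness of the dictionary; the remaining care is with the overlap of consecutive phrases by $w$ characters and with the terminator, which must be checked so that the passage between character comparison and rank comparison is exact.
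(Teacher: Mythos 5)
Your proof is essentially correct, but it takes a genuinely different route from the paper: the paper does not prove the lemma at all, it simply observes that the claim follows from Lemmas 1--7 of Boucher et al.\ \cite{2019boucher_etal}, whereas you re-derive that machinery from first principles. Your ``uniqueness-of-boundary'' observation (a trigger word ends only at the right end of each $\alpha$, hence the set of boundary prefixes is prefix-free) is exactly the content of Boucher et al.'s lemmas on the prefix-freeness of the set $S$ of long phrase suffixes and on every suffix of $\texttt{\#T}\texttt{\$}^w$ having a unique prefix in $S$; and your lockstep character-vs-rank comparison in the case $\alpha_x = \alpha_y$ is their order-preservation lemma for the map $f$ sending a text suffix to its phrase-suffix prefix. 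Note that under your ``first trigger occurrence met while scanning $x$'' convention, prefix-freeness follows immediately from minimality of the first match, which is cleaner than arguing about phrase interiors. What the citation buys the paper is precisely the edge cases you flag but defer: a suffix of $T$ beginning strictly inside a trigger word (where the nearest boundary is at distance $< w$ and the literal boundary prefix can fail prefix-freeness --- Boucher et al.\ avoid this by restricting to phrase suffixes of length $> w$), and a suffix lying entirely within the last phrase, containing no trigger word at all (handled by padding with $\texttt{\$}^w$, which acts as a sentinel trigger). Your self-contained argument would be complete once these two cases are folded into your definition of $\alpha_x$, e.g.\ by adopting Boucher et al.'s convention of taking the unique phrase-suffix prefix of length greater than $w$ of each suffix of the padded text; what it buys in exchange is a proof readable without consulting seven external lemmas, and a transparent reason why rank comparison in $\mathcal{P}$ agrees with character comparison in $T$ (prefix-freeness of $\mathcal{D}$ plus ranks being assigned in lexicographic order of phrases).
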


\begin{proof}
The claim follows from the lemmas 1-7 proven by Boucher et al. \cite{2019boucher_etal}.
\end{proof}

The lemma \ref{lemma:suffix_comparison} allows the $BWT(T)$ to be constructed from the parse $\mathcal{P}$ and the dictionary $\mathcal{D}$ by iterating over the suffixes of the phrases from $\mathcal{D} = \{ D_1, \ldots, D_k \}$ sorted in the lexicographic order and outputting the characters preceding their occurrences in $T$.
As the vertices of Wheeler graphs are totally ordered according to the suffixes of $T$ they represent, we immediately obtain a straightforward generalization of this result for Wheeler graphs in the sense that a Wheeler graph of $\mathcal{P}$ implies a Wheeler graph of $T$. The exact meaning of this implication is detailed in Def. \ref{def:expandedWG}.

\begin{definition}[Expanded graph]
\label{def:expandedWG}
Let $T$ be a string and $G_{\mathcal{P}} = (V_{\mathcal{P}} = [1, m], E_{\mathcal{P}}, \lambda_{\mathcal{P}})$ be a Wheeler graph of $\mathcal{P}$ from the $PFP(T)$, the vertices of which are labeled according to a Wheeler order. Then we define the graph $G' = (V', E', \lambda')$ to be the Expanded graph of $T$ constructed from $G_{\mathcal{P}}$, where

\begin{align*}
V_{\text{int}} = & \{v_{e,2},\ldots,v_{e,\ell-w} ~|~ e \in E_{\mathcal{P}}, \ell = |D_{\lambda_{\mathcal{P}}(e)}|\},\\
V' = & V_{\mathcal{P}} \cup V_{\textrm{int}},\\
E' = &\{(v_{e,i}, v_{e,i-1}) ~|~ e \in E_{\mathcal{P}}, i \in [3, \ell-w], \ell = |D_{\lambda_{\mathcal{P}}(e)}|\} ~\cup\\
 &\{(t, v_{e, \ell-w}) ~|~ e = (t, u) \in E_{\mathcal{P}}, \ell = |D_{\lambda_{\mathcal{P}}(e)}|\} ~\cup \\
 &\{ (v_{e, 2}, u) ~|~ e = (t, u) \in E_{\mathcal{P}} \} \\
\lambda' (e') = &
\begin{cases}
D_{\lambda_{\mathcal{P}}(e)}[i-1] & \text{if } e' =
(v_{e, i}, v_{e, i-1}), e \in E_{\mathcal{P}},
\ell = |D_{\lambda_{\mathcal{P}}(e)}|, i \in [3, \ell-w],\\
D_{\lambda_{\mathcal{P}}(e)}[\ell-w] & \text{if } e' = (t, v_{e, \ell-w}),  e \in E_{\mathcal{P}},  \ell = |D_{\lambda_{\mathcal{P}}(e)}|,\\
D_{\lambda_{\mathcal{P}}(e)}[1] & \text{if } e' = (v_{e, 2}, u), e = (v, u) \in E_{\mathcal{P}}.\\
\end{cases}\\
\end{align*}
\end{definition}

The expanded graph $G_T = (V, E, \lambda)$ is therefore obtained by expanding each edge $e = (t, u) \in E_{\mathcal{P}}$ labeled $\lambda_{\mathcal{P}}(e)$ into a path with newly added internal vertices $v_{e,2}, \ldots, v_{e,\ell-w}$ so that the concatenation of the labels $\lambda(( v_{e, 2}, u)) \lambda((v_{e,3}, v_{e,2})) \ldots \lambda((v_{e, \ell-w}, v_{e,\ell-w-1})) \lambda((t, v_{e,\ell-w})) = D_{\lambda(e)}[1 \ldots \ell - w]$, where $\ell = |D_{\lambda(e)}|$.
In other words, if we take an edge $e = (t, u) \in E_{\mathcal{P}}$ and an internal vertex $v$ from the expansion of $e$ in $G_T$, then starting from $v$ and outputting the edge labels while traversing the reversed edges up to $t$ we get a suffix of $D_{\lambda(e)}[1 \ldots \ell-w]$. With the help of Def. \ref{def:incoming}, we extend this notion to non-internal vertices in Def. \ref{def:phrase_suff}, which we then use in Thm. \ref{thm:core}.

\begin{definition}[Incoming label]
\label{def:incoming}
Let $G = (V, E, \lambda)$ be a Wheeler graph. Then for any $v \in V$ we will call the value $i(v) = \lambda((u, v))$ for such $u \in V$ that $(u, v) \in E$ an incoming label of the vertex $v$.
\end{definition}

\begin{definition}[Phrase suffix]
\label{def:phrase_suff}
Let $G_{\mathcal{P}} = (V_{\mathcal{P}}, E_{\mathcal{P}}, \lambda_{\mathcal{P}})$ be a
Wheeler graph over a PFP of the string $T$. Let $G = (V, E, \lambda)$ be the expanded graph of $G_{\mathcal{P}}$. Then for any $v \in V$ we defined $f(v)$ as

$$
f(v) = \begin{cases}
			D_{i(w)}[i \ldots \ell-w], & \text{if $v = v_{e,i}, e = (u, w) \in E_{\mathcal{P}}, i \in [2,\ell-w], \ell = |D_{\lambda_{\mathcal{P}}(e)}|$}\\
            D_{i(v)}[1 \ldots \ell-w], & \text{otherwise}
		 \end{cases}
$$
\end{definition}

Note that in a Wheeler graph, all the incoming edges to a vertex $v$ have to be labelled the same, hence the correctness of Def. \ref{def:incoming}. Def. \ref{def:phrase_suff} deals with two cases -- the first one being the internal vertex of the expanded Wheeler graph and a vertex of the $G_{\mathcal{P}}$. Now we are ready to state our core theorem.

\begin{theorem}
\label{thm:core}
Let $G_{\mathcal{P}} = (V_{\mathcal{P}}, E_{\mathcal{P}})$ be a
Wheeler graph over a PFP of the string $T$. Let $G = (V, E)$ be the expanded graph of $G_{\mathcal{P}}$. Then $G$ is also a Wheeler graph.
\end{theorem}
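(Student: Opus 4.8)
The plan is to exhibit an explicit total order on $V$ and check it against the three conditions of Definition~\ref{def:wheeler_graphs}. The order I propose sorts every vertex by the text it reads off going forward: take the phrase suffix $f(v)$ of Definition~\ref{def:phrase_suff} as the primary key, and break ties by the position, in the Wheeler order of $G_{\mathcal{P}}$, of the vertex that carries the continuation of $v$ past the current phrase boundary. Concretely, for an internal vertex $v_{e,i}$ with $e=(u,w)\in E_{\mathcal{P}}$ the continuation is carried by the tail $u$ (which sits at the start of the next phrase), while an original vertex $w\in V_{\mathcal{P}}$ simply keeps its $G_{\mathcal{P}}$-slot. The justification that this is a genuine total order is exactly Lemma~\ref{lemma:suffix_comparison}: comparing two vertices of $G$ amounts to comparing two suffixes of $T$, and the lemma splits that comparison into the phrase-prefix part, captured by $f$, and, on ties, the parse-suffix part, captured by the order of $G_{\mathcal{P}}$. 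I would also record the in-degrees in $G$ up front: every internal vertex has in-degree exactly $1$ (its only predecessor is $v_{e,i+1}$, or $\mathrm{tail}(e)$ when $i=\ell-w$), whereas an original vertex inherits its $G_{\mathcal{P}}$-in-degree; hence the only vertices of $G$ with in-degree $0$ are the sources of $G_{\mathcal{P}}$, which I place first, settling condition~1.

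For condition~2 the key bookkeeping step, verified once for each of the three edge families of Definition~\ref{def:expandedWG} (the internal edges $(v_{e,i},v_{e,i-1})$, the entry edges $(\mathrm{tail}(e),v_{e,\ell-w})$, and the exit edges $(v_{e,2},w)$), is that the label of an edge always equals the first character of $f$ of its head: writing $c=\lambda_{\mathcal{P}}(e)$, the labels $D_{c}[i-1]$, $D_{c}[\ell-w]$ and $D_{c}[1]$ are respectively the first symbols of $f(v_{e,i-1})=D_{c}[i-1\ldots\ell-w]$, of $f(v_{e,\ell-w})=D_{c}[\ell-w]$, and of $f(w)=D_{c}[1\ldots\ell-w]$. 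Granting this, if $\lambda(e_1)\prec\lambda(e_2)$ then the primary keys $f(y_1)$ and $f(y_2)$ of the two heads differ already in their first character in the same direction, so $f(y_1)\prec f(y_2)$ and $y_1<y_2$ by the very definition of the order. Condition~2 therefore drops out immediately from the choice of key.

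Condition~3 is where the real work sits. Here I would use the recursive shape of the order: for every edge $(x,y)$ of $G$ labelled $a$ one checks, again family by family, that the forward key of $y$ is $a$ followed by the forward key of $x$ — for an internal head $f(y)=a\cdot f(x)$ with the continuation unchanged, while for the two boundary families one reads the missing suffix of the key off the tail $x$ itself (entry edges) or extends $f$ and inherits the continuation through the $G_{\mathcal{P}}$-slot of the head (exit edges). Consequently, given two edges $(x_1,y_1),(x_2,y_2)$ with equal label $a$ and $x_1<x_2$, prepending the common symbol $a$ preserves the order of the keys, so $y_1\le y_2$, which is exactly condition~3.

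The main obstacle is the hand-off at phrase boundaries hidden inside this last step. When the tie-break passes from the within-phrase comparison, decided by $f$, to the parse-suffix comparison, decided by $G_{\mathcal{P}}$ — precisely the cases where an internal vertex is weighed against an original vertex, or where two exit edges share a head — the argument leaves a single phrase and one must invoke both Lemma~\ref{lemma:suffix_comparison} and the fact that $G_{\mathcal{P}}$ is itself a Wheeler graph. Two points need care. First, an original vertex $w$ can have in-degree greater than one in $G_{\mathcal{P}}$ (for instance after tunnelling), so it has no single continuation; what rescues the argument is that all its incoming edges share the label $i(w)$, hence spell the same phrase and the same $f(w)$, and $G_{\mathcal{P}}$ being Wheeler already pins $w$ to one slot consistent with all those predecessors. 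Second, to know that the $G_{\mathcal{P}}$-order on parse ranks is compatible with the lexicographic order of phrases that $f$ sees, I would use that PFP assigns ranks by sorting the dictionary and that phrases are prefix-free, so that equality of the truncated phrases $D[1\ldots\ell-w]$ forces equality of the phrases themselves. I would finish by disposing of the degenerate short phrases with $\ell-w\le 1$, where the expansion creates no internal vertex and the edge of $G_{\mathcal{P}}$ survives unchanged.
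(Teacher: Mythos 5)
Your proposal is correct and follows essentially the same route as the paper's own proof: you define the order on $V$ with the phrase suffix $f(v)$ of Definition~\ref{def:phrase_suff} as primary key and the Wheeler order of $G_{\mathcal{P}}$ as tie-break at phrase boundaries, then invoke Lemma~\ref{lemma:suffix_comparison} to justify that this sorts vertices by the suffixes of $T$ they represent. The paper's argument is in fact far terser -- it states the order, dismisses condition~1 (claiming no zero in-degree vertices), and appeals to Lemma~\ref{lemma:suffix_comparison} in one line -- so your per-edge-family verification of conditions~2 and~3, and your handling of multi-in-degree heads and degenerate short phrases, supply detail the paper leaves implicit rather than deviating from it.
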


\begin{proof}
We prove the claim by showing that there is an ordering of vertices from $V$ such that the conditions from Def. \ref{def:wheeler_graphs} are fulfilled.
We define the order as follows.
Let $v_1, v_2 \in V$ and $f(v_i)$ a phrase suffix according to Def. \ref{def:phrase_suff} for $i = 1,2$. Then $v_1 \prec v_2$ if and only if $f(v_1) \prec f(v_2)$ or $f(v_1) = f(v_2)$ and $u_1 < u_2$.

We have no vertices of zero in-degree in $G$, so property 1 is satisfied for any vertex order. We need to show that the vertices in $V$ are sorted according to the suffixes of $T$ they represent. However, this follows from the lemma \ref{lemma:suffix_comparison} and the definition of our order.
\end{proof}

We note that finding an ordering of the nodes of Wheeler graphs is \textit{NP}-complete in general \cite{2022gibney&thankachan}; however, it can indeed be accomplished in a polynomial time in our restricted case, in which we already have the relative order of the vertices with in-degree and out-degree larger than one\footnote{we even have the relative order of some vertices with both in-degree and out-degree equal to $1$, but that is not helpful in general}.

To build the expanded Wheeler graph, we follow an approach similar to that of Boucher et al. \cite{2019boucher_etal} for building large BWTs from $\mathcal{P}$ and $\mathcal{D}$. First, we construct a suffix array SA$_\mathcal{D}$ of the concatenation of all phrases from the dictionary $\mathcal{D}$ in the lexicographic order. As we pointed out before, to find the relative order of the nodes $u, v$ in a Wheeler graph, we need to compare the suffixes of $T$ that precede $u$ and $v$.
According to the proof of Theorem \ref{thm:core}, it only suffices to compare the suffixes $\alpha, \beta$ of the phrases in which $u, v$ are located, and if they are equal, we need only to compare their following suffixes, which we can do using the parse $\mathcal{P}$. Having already computed the Wheeler graph $G_\mathcal{P}$, we know the lexicographic rank of each suffix among those starting at phrase boundaries.

Hence, as in Boucher et al. \cite{2019boucher_etal}, we will iterate over the SA$_{\mathcal{D}}$ while we simultaneously build a succinct representation of $G_T$, i.e. the array $L$ together with bit vectors $I$ and $O$. For each phrase $d \in \mathcal{D}$, we save the ranks of vertices $u \in V$ such that $e = (u,v) \in E$ and $\lambda(e) = d$. Moreover, for each such phrase suffix $s$, we save the labels of the outgoing edges from the vertices $v \in V$ that are suffixed by $s$.

We will proceed in two passes, each time outputting the entries of the arrays $L$, $I$ and $O$ for a different subset of nodes. First, we will process those that correspond to vertices suffixed by those elements $s \in \mathcal{S}$ that are either suffix of only a single phrase $d \in \mathcal{D}$ or appear only once in $\mathcal{D}$, leaving placeholders for the rest of the vertices. The vertices corresponding to elements of $\mathcal{S}$ that are suffixes of multiple phrases of $\mathcal{D}$ or are a whole phrase that appears multiple times in $T$ will be processed according to the increasing Wheeler order of their phrase-boundary suffixes. All in all, this proves Theorem \ref{thm:result}.

\begin{theorem}
\label{thm:result}
Let $G_{\mathcal{P}} = (V_{\mathcal{P}}, E_{\mathcal{P}}, \lambda_{\mathcal{P}})$ be a
Wheeler graph over a PFP of the string $T$ of length $n$. Then we can build the expanded Wheeler graph $G = (V, E, \lambda)$ from $G_{\mathcal{P}}$ in $O(n)$ time and workspace proportional to $O(PFP(T))$.
\end{theorem}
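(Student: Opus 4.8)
The plan is to turn the two-pass construction sketched above into an explicit algorithm and to bound its running time and its peak memory. The starting ingredients are the parse $\mathcal{P}$, the dictionary $\mathcal{D}$, and the already-computed Wheeler graph $G_{\mathcal{P}}$, all of which occupy $O(PFP(T))$ space. First I would build the suffix array $SA_{\mathcal{D}}$ of the concatenation of the dictionary phrases; using a linear-time suffix-array construction algorithm this costs time and space linear in the total dictionary length, hence $O(PFP(T))$. Scanning $SA_{\mathcal{D}}$ then enumerates, in lexicographic order, every suffix of a dictionary phrase, and by Def. \ref{def:phrase_suff} each such suffix $s$ is exactly the value $f(v)$ of some set of vertices $v \in V$ of the expanded graph: the internal vertices of an expanded path and the boundary vertices inherited from $G_{\mathcal{P}}$.

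Correctness rests on Theorem \ref{thm:core} and Lemma \ref{lemma:suffix_comparison}. Theorem \ref{thm:core} tells us that the Wheeler order on $V$ is obtained by sorting vertices by $f(v)$ and breaking ties by the order of the source vertex, while Lemma \ref{lemma:suffix_comparison} tells us that comparing the underlying suffixes of $T$ reduces to comparing $f(v)$ (a dictionary-phrase suffix) and, when these coincide, comparing the subsequent suffixes in $\mathcal{P}$. Consequently, when a suffix $s$ read from $SA_{\mathcal{D}}$ either occurs in a single phrase or belongs to a phrase with a unique occurrence in $T$, the rank of the corresponding vertex is already fixed by $s$ alone, and I would emit its entries of $L$, $I$ and $O$ immediately in the first pass, leaving placeholders for the remaining, ambiguous vertices. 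The ambiguous vertices---those whose $f(v)$ is shared by several phrases or is a whole phrase occurring repeatedly---are filled in a second pass in the increasing Wheeler order of their phrase-boundary suffixes, which $G_{\mathcal{P}}$ supplies directly. This is precisely the graph analogue of the BWT-from-PFP procedure of Boucher et al. \cite{2019boucher_etal}, now emitting the three succinct arrays rather than a single BWT string.

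For the complexity bound, observe that expanding each edge $e \in E_{\mathcal{P}}$ into a path contributes $|D_{\lambda_{\mathcal{P}}(e)}| - w$ internal vertices, so summing over all edges gives $|V| = O(n)$ and likewise $|E| = O(n)$; the succinct representation therefore has size $O(n)$. Each vertex and each outgoing edge is written exactly once across the two passes, so the total output work is $O(n)$, while the $SA_{\mathcal{D}}$ scan together with the rank lookups in $G_{\mathcal{P}}$ add only $O(PFP(T))$ overhead. Because $L$, $I$ and $O$ are streamed to external memory as they are produced and are never held in full, the peak working memory stays proportional to $SA_{\mathcal{D}}$, the placeholder bookkeeping, and $G_{\mathcal{P}}$, i.e. $O(PFP(T))$.

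The main obstacle I anticipate is the second pass: resolving the ambiguous vertices in the correct global Wheeler order while touching each of them only a constant number of times and without re-reading $T$. Concretely, one must guarantee that the placeholders left in the first pass are filled consistently with the tie-break rule of Theorem \ref{thm:core}, using only the ranks provided by $G_{\mathcal{P}}$ and the occurrence information stored with each dictionary suffix. Handling the interaction between repeated whole phrases and suffixes shared across several phrases---so that both the in-degree bitvector $I$ and the out-degree bitvector $O$ receive their bits in the right positions---is the delicate bookkeeping step that the remainder of the argument must make precise.
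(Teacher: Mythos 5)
Your proposal follows essentially the same route as the paper: the paper's justification of Theorem~\ref{thm:result} is exactly the two-pass scan of $SA_{\mathcal{D}}$ you describe, emitting $L$, $I$ and $O$ for suffixes occurring in a single phrase (or in a phrase with a unique occurrence) while leaving placeholders, then resolving the ambiguous vertices in the increasing Wheeler order of their phrase-boundary suffixes as supplied by $G_{\mathcal{P}}$, with correctness resting on Theorem~\ref{thm:core} and Lemma~\ref{lemma:suffix_comparison} and the same $O(n)$ output-size and $O(PFP(T))$ workspace accounting. The only divergence is immaterial to the proof: in its implementation section the paper swaps the placeholder pass for a heapsort merge of the ambiguous vertices, but that is an engineering variant of the same argument.
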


\subsection{Implementation}
We have built our approach as an extension of the \texttt{bigbwt} -- an implementation of the original method by Boucher et al. \cite{2019boucher_etal} for building BWTs using PFP.
We have used the original PFP component of the \texttt{bigbwt} without changes. In contrast with \texttt{bigbwt}, the only intermediate file that we need apart from the PFP is the tunnelled BWT of the input.
We have used the linear time SACA-K \cite{2013nong} algorithm that only uses $O(1)$ workspace to create the suffix array of $\mathcal{P}$, which is subsequently transformed into the $BWT(\mathcal{P})$.
As in \texttt{bigbwt}, the suffix array of $\mathcal{D}$ along with the LCP array is created using the gSACAK algorithm \cite{2017louza_etal} and is later used to iterate over the suffixes of the phrases from $\mathcal{D}$ in the increasing lexicographic order.

To find the tunnelling of the $BWT(\mathcal{P})$, we considered several approaches suggested by Baier et al. \cite{2018baier, 2019baier&dede, 2020baier_etal}. Most of them aimed at reducing the compressed size of the run-length encoded BWT.
Since our current goal is not to optimize the compressed size of the constructed WG (which is not straightforward to do when only having access to $\mathcal{P}$), we have employed a tunnelling method which does not necessarily minimize the size of the tunnelled BWT after compression, but rather the number of edges in the resulting WG.
While being tightly linked with the edge-minimization of the de Bruijn graph constructed from the input text, the prefix intervals in this method do not overlap, and Baier et al. \cite{2020baier_etal} have shown that the method leads to significant reduction of BWT size for repetitive datasets.
The only alteration we have made is that we have enabled the processing of large integer alphabets.

The implementation is heavily based on the Succinct Data Structure Library (SDSL) by Simon Gog et al.\cite{2014gog_etal}.
Namely, the tunnelled BWT is represented as a wavelet tree that stores the $L$ component, which allows for efficient select queries.
The rank and select support for the bitvectors $I$ and $O$ is also supported by SDSL.

Similarly to the implementation of the \texttt{bigbwt} \footnote{\url{https://gitlab.com/manzai/Big-BWT}} \cite{2019boucher_etal}, we slightly deviate from the method of resolving the ambiguous phrase suffixes; instead of leaving the placeholders to fill in another pass, we use heapsort to merge the given vertices according to the rest of the suffixes. We note that this information is already available in the $G_{\mathcal{P}}$ as the labels of the vertices in $V_{\mathcal{P}}$, so we only need to sort integers instead of actual suffixes. This approach allows us to sequentially write the succinct representation of the resulting WG in a single pass, to some extent alleviating the slow writing process of external memories.
\section{Experiments}
We have demonstrated the applicability of our approach using two real-world datasets, 5520 genomes of the Salmonella genus and 1000 copies of human chromosome 19.

The Salmonella dataset was obtained using \texttt{ncbi-datasets-cli} tool.
Particularly, the command \texttt{datasets download genome taxon salmonella --assembly-level complete\_genome} was used.
The genomic fasta files were concatenated into a single fasta file of size 13GB. The dataset of haplotypes of human chromosome 19 was obtained from the 1000 genomes project \cite{2008siva}\footnote{dataset available at \url{http://dolomit.cs.tu-dortmund.de/tudocomp/}}.

From these fasta files, we extracted subsets of increasing sizes to observe the trends of running time and memory consumption, as well as the sizes of the resulting WGs. For the sake of simplicity, we preprocessed the datasets so that we ignored all characters apart from \texttt{A}, \texttt{C}, \texttt{G}, \texttt{T}.
All of the experiments were run on a machine with Intel(R) Xeon(R) CPU E5-2670 0 @ 2.60GHz and 144GB of operational memory.
Every program was executed using a single thread. The running time and peak memory consumption were measured using the \texttt{/usr/bin/time -v} command.

The results of the experiments for the copies of the Salmonella genus and chromosome 19 are displayed in Table \ref{tab:salm} and \ref{tab:chr19}.
Apart from our implementation, we also provide the results for the approach of Baier et al. \cite{2020baier_etal} for comparison.
Their approach consists of first constructing a BWT of the input and then tunnelling the constructed BWT.
The BWT construction algorithm can be done by either the DIVSUFSORT algorithm \cite{2017fischer&kurpicz} or a semi-external variant of the induced sorting algorithm \cite{2009nong_etal} (SE-SAIS), where the former is fast but demanding significantly more memory than the latter, which is slower but memory-efficient.
We have used the SE-SAIS since it makes a more fair comparison to our PFP-based algorithm.
Our experiments were all ran with the PFP parameters set to $w = 4$ and $p = 50$. 

Expectedly, the WGs we produce are larger than those of Baier et al. due to the constraints of PFP, which limits the space of possibilities of tunnelling and treats the removal of any parse character such as having the same benefit, which is clearly not the case from the perspective of the input.
We also point out that smaller WGs can be achieved with different combinations of the PFP parameters.
Our approach, however, operates within much less memory and shorter running time, which allows it to construct WGs for datasets of unprecedented sizes, e.g. the whole 1000 human genomes project and beyond.

\begin{table}[ht]
\centering
\begin{tabular}{|ll|rrr|rrr|} \hline
\multirow{2}{*}{\#sequences} & \multirow{2}{*}{size} & \multicolumn{3}{c|}{Baier et al. \cite{2020baier_etal}} & \multicolumn{3}{c|}{PFP} \\ \cline{3-8}
& & time & memory & size & time & memory & size\\ \hline
1 & 5 & 8 & 47 & 6 & 5 & 57 & 7\\
10 & 34 & 42 & 118 & 16 & 24 & 223 & 38\\
100 & 238 & 301 & 389 & 55 & 99 & 696 & 174\\
1000 & 2619 & 3431 & 3128 & 430 & 568 & 2709 & 1400\\
5000 & 12053 & 18658 & 14538 & 1275 & 1726 & 4859 & 4189\\ \hline
\end{tabular}
\caption{Results of the experiments on the Salmonella genomes. The 'time' is the running time in seconds, 'memory' stands for the peak memory usage in MBs during the running time. The file sizes are reported in MBs.}
\label{tab:salm}
\end{table}

\begin{table}[ht]
\centering
\begin{tabular}{|ll|rrr|rrr|} \hline
\multirow{2}{*}{\#sequences} & \multirow{2}{*}{size} & \multicolumn{3}{c|}{Baier et al. \cite{2020baier_etal}} & \multicolumn{3}{c|}{PFP} \\ \cline{3-8}
     & & time & memory & size & time & memory & size\\ \hline
    1    &  54      &   81      &   341     & 67 & 82 & 597 & 70\\
    10   &  533     &   700     &   708     & 87 & 157 & 880 & 296\\
    100  &  5322    &   7263    &   6748    & 162 & 644 & 1319 & 2123\\
    500  &  27903   &  54498    &   34031    & 749 & 2634 & 4190 & 13110\\
    1000 &  53220   & --- & --- & --- & 5938 & 8009 & 29803\\ \hline
\end{tabular}
\caption{Results of the experiments on the copies of chromosome 19. The 'time' is the running time in seconds, 'memory' stands for the peak memory usage in MBs during the running time. The file sizes are reported in MBs. The Baier et al.'s \cite{2020baier_etal} approach for 1000 sequences was terminated after 20 hours.}
\label{tab:chr19}
\end{table}

\section{Conclusion and future work}
We have successfully demonstrated that the prefix-free parsing technique can be used to alleviate the computational requirements of the construction of tunnelled BWTs.
Our approach allows the use of Wheeler graphs as pangenomic references for huge datasets such as the 1000 Genomes Project, the Vertebrate Genomes Project, the Earth Microbiome Project, and many more. 

To this end, we generalized the approach of Boucher et al. \cite{2019boucher_etal} devised for the construction of BWTs from large volumes of repetitive data and experimentally showed it can be a good starting point for building Wheeler graphs.

Since our approach treats the PFP parse as any input text, it does not exploit the full information the PFP provides and could benefit from incorporating the lengths of the dictionary phrases into the tunnelling process, potentially even to the point of simulating the tunnelling algorithm on the original text and producing the same output. We leave this for future work.

Another logical next step is to enhance our approach with the ability to output a sufficiently small suffix array sample to allow our Wheeler graphs to locate the occurrences of patterns.

\bibliography{content/main}

\appendix

\end{document}